\newcommand{\sgn}{\mbox {\rm sgn }}
\newcommand{\eal}{\end{align}}
\newcommand{\beq}{\begin{equation}}
\newcommand{\eeq}{\end{equation}}
\newcommand{\bea}{\begin{eqnarray}}
\newcommand{\eea}{\end{eqnarray}}
\newcommand{\beas}{\begin{eqnarray*}}
\newcommand{\eeas}{\end{eqnarray*}}
\newcommand{\ba}{\begin{array}}
\newcommand{\ea}{\end{array}}
\newcommand{\bit}{\begin{itemize}}
\newcommand{\eit}{\end{itemize}}
\newcommand{\ben}{\begin{enumerate}}
\newcommand{\een}{\end{enumerate}}
\newcommand{\bR}{{\mathbb R}}
\newcommand{\bZ}{{\mathbb Z}}
\newtheorem{proof}{Proof}
\newtheorem{assumption}{Assumption}
\newtheorem{remark}{Remark}[section]
\newtheorem{lemma}{Lemma}[section]
\begin{document}
\date{}

\title{Adaptive model predictive control with exploring property for constrained linear systems that uses basis function model parametrization\thanks{This research has received funding from the European Union Seventh Framework Programme (FP7/2007-2013) under grant agreement n. PIOF-GA-2009-252284 - Marie Curie project ``Innovative Control, Identification and Estimation Methodologies for Sustainable Energy Technologies''.}}
\author{M. Tanaskovic, L. Fagiano, R. S. Smith and  M. Morari \thanks{Automatic Control Laboratory, Swiss Federal Institute of Technology Zurich, Switzerland. Email adresses: tmarko@control.ee.ethz.ch (Marko Tanaskovic),  fagiano@control.ee.ethz.ch (Lorenzo Fagiano), rsmith@control.ee.ethz.ch (Roy Smith), morari@control.ee.ethz.ch (Manfred Morari). Corresponding author M.~Tanaskovic.}}

\maketitle

\section{Introduction}
This manuscript contains technical details of recent results developed by the authors on adaptive model predictive control for constrained linear systems that exhibits exploring property and uses basis function model parametrization.

\section{Problem formulation}

\label{S:problem}

We consider a MIMO, discrete time, strictly proper, linear time invariant (LTI) system with $n_u$ inputs and $n_y$ outputs. The system is known to be stable, but the exact system's dynamics are not known. We denote the vector of control inputs at time step $t\in \bZ$ by $u(t)=[u_1(t),\hdots,u_{n_u}(t)]^T$, where $u_i(t)\in \bR,\,i=1,\hdots,n_u$ are the individual plant inputs at the time step $t$ and $^T$ stands for the matrix transpose operator. In addition, we denote the vector of plant outputs by $y(t)=[y_1(t),\hdots,y_{n_y}(t)]^T$, where $y_j(t)\in \bR ,\, j=1,\hdots, n_y$ are the individual plant outputs, and we denote the vector of output disturbances by $d(t)=[d_1(t),\hdots,d_{n_y}(t)]^T$, where each $d_j(t)\in \bR, \, j=1,\hdots,n_y$ denotes the contribution of the disturbances to the output $j$ at time step $t$.\\
The dynamic relation between the plant inputs and each of the plant outputs $y_j(t),\, j=1,\hdots,n_y$, is given by:
\beq\label{Eq:plant_output}
\begin{aligned}
y_j(t)&=\sum\limits_{i=1}^{n_u}\sum\limits_{k=1}^{m}h_{ji}(k)\zeta\left(\mathcal{L}_k(a,q),u_i(t)\right)+d_j(t)= H_j^T\varphi(t)+d_j(t),
\end{aligned}
\eeq
where $\mathcal{L}_k(a,q),\, k=1,\hdots, m$ are the basis transfer functions defined by the parameter $a$ that are selected by the control designer, $q$ is the time shift operator ($qu(t)\doteq u(t+1)$) and the operator $\zeta\left(\mathcal{L}_k(a,q),u(t)\right)$ denotes the output of a linear system represented by the transfer function $\mathcal{L}_k(a,q)$ at the time step $t$, when the signal $u$ is applied as its input:
\beq\label{Eq:operator}
\zeta\left(\mathcal{L}_k(a,q),u(t)\right)\doteq \sum\limits_{l=1}^{\infty}\Psi_k(a,l)u(t-l),
\eeq
where $\Psi_k(a,l)$ are the impulse response coefficients of the basis function $\mathcal{L}_k(a,q)$. $\varphi(t)\doteq [\zeta\left(\mathcal{L}_1(a,q)u_1(t)\right), \hdots,\\\zeta\left(\mathcal{L}_m(a,q)u_1(t)\right),\hdots, \zeta\left(\mathcal{L}_1(a,q)u_{n_u}(t)\right), \hdots, \zeta\left(\mathcal{L}_m(a,q)u_{n_u}(t)\right)]^T$ is the regressor vector and each of the vectors $H_j\in \bR^{n_um},\, j=1,\hdots, n_y$ contains $n_ym$ modeling coefficients needed to describe the influence of all the control inputs on the plant output $j$: $H_j\doteq[h_{j1}(1), \hdots, h_{j1}(m), \hdots, h_{jn_u}(1), \hdots, h_{jn_u}(m)]^T$. By defining the matrix $H\in \bR^{n_y\times n_um}$ as $H \doteq \left[H_1, \hdots, H_{n_y}\right]^T$, the dependence of the plant output on the regressor and the disturbance vectors can be written as:
\beq\label{Eq:plant_output_matrix}
y(t)=H\varphi(t)+d(t).
\eeq
\begin{remark}\label{R:different}
Note that the same value of the parameter $a$ and the number of modeling coefficients $m$ is assumed for all the input-output pairs, in order to simplify the notation. All the results can easily be extended to the case when different values for $a$ and $m$ are assumed for each input-output pair.  
\end{remark}
The measured output available for feedback control is corrupted by noise. In particular, the vector of measured plant outputs is given as:
\beq\label{Eq:measured_output}
\tilde{y}(t)=y(t)+v(t),
\eeq
where $v(t)=[v_1(t),\hdots,v_{n_y}(t)]^T$ and $v_j(t),\, j=1,\hdots,n_y$ are the individual measurement noise terms that affect each of the measured plant outputs.
\begin{assumption}\label{A:disturbance}(Prior assumption on disturbance and noise)
$d$ and $v$ are bounded as:
\beq\label{Eq:dist_bound}
\ba{lll}
|d_j(t)|&\leq&\epsilon_{d_j}\\
|v_j(t)|&\leq&\epsilon_{v_j}
\ea,\,\forall t\in\bZ,\,\forall j=1,\hdots,n_y,
\eeq
where $\epsilon_{d_j}$ and $\epsilon_{v_j}$ are positive scalars.\
\end{assumption}
We also use the noise and disturbance magnitude bounds in the vector notation as $\epsilon_d=[\epsilon_{d_1}, \hdots, \epsilon_{d_{n_y}}]^T$ and $\epsilon_v=[\epsilon_{v_1}, \hdots, \epsilon_{v_{n_y}}]^T$.
\begin{assumption}\label{A:system_class}(Prior assumption on the system)
The plant belongs to the following model set: $H\in \mathcal{F}(0)$, with
\beq\label{Eq:system_set}
\mathcal{F}(0) \doteq \left\{H\in \bR^{n_y\times n_um}: A_{j}(0)H_j\leq b_{j}(0), j=1,\hdots, n_y \right\},
\eeq
where the inequalities in \eqref{Eq:system_set} should be interpreted as element-wise inequalities and each matrix $A_{j}(0)\in \bR^{r_{j}(0)\times n_um}$ and vector $b_{j}(0) \in \bR^{r_{j}(0)}$ define a close and convex set, i.e. a polytope with $r_{j}(0)$ faces.
%\beq\label{Eq:system_membership}
%\forall j=1,\ldots,n_y,\,\, H_{j}\in\calK_{j}(L_{j},\rho_{j},\mu_{j}),
%\eeq
%where, for given $L_j=[L_{j1},\hdots,L_{jn_u}]^T$, $L_{ji}\in \bR,\, L_{ji}\geq0$, $\rho_j=[\rho_{j1},\hdots,\rho_{jn_u}]^T$, $\rho_{ji}\in \bR, \,\rho_{ji}\in( 0,1) $ and %$\mu_j=[\mu_{j1},\hdots,\mu_{jn_u}]^T$, $m\geq \mu_{ji} \in \mathbb{N}$:
%\beq\label{Eq:system_set}
%\begin{aligned}
%\calK_{j}(L_{j},\rho_{j},\mu_{j})\doteq \left\{
%H_{j}\in \bR^{n_um}:\ba{llll}
%|h_{ji}(k)|& \leq & L_{ji} & k=1, \hdots,\mu_{ji}\\
%|h_{ji}(k)|& \leq & L_{ji}\rho_{ji}^{k-\mu_{ji}} & k=\mu_{ji}+1,\hdots, m
%\ea, \. i=1,\hdots,n_u
%\right\}.
%\end{aligned}
%\eeq
\end{assumption}
According to Assumption \ref{A:system_class}, the initial knowledge about the modeling parameters is that the vectors $H_j, \, j=1,\hdots, n_y$, which form the rows of the matrix $H$, belong to polytopic sets. Note that we initialized the set $\mathcal{F}(0)$ in \eqref{Eq:system_set} at $t=0$ without loss of generality, just to indicate that this is the information available before any measured data is available.\\
Under Assumptions \ref{A:disturbance} and \ref{A:system_class}, the goal is to control the plant in order to track a desired output reference and reject disturbances from $t=0$ up to some finite time step $T$, where the time horizon $T$ can be very large ($T\gg m$). In addition, we assume that the control inputs $u(l), l=-T_1,\hdots,-1$ are known, where the horizon $T_1$ should be long enough such that the effects of $u(-T_1-1)$ on the regressor vector $\varphi(0)$ are small. The error in the regressor vector coming from the initial state can be accounted for by embedding it in the bound on $d(t)$.  Moreover, the controller shall enforce input and output constraints. Such a control objective can be formalized by the following optimization problem:
\begin{eqnarray}\label{Eq:problem_cost}
&
\begin{aligned}
\min\limits_{u(0),\hdots,u(T-1)}\sum_{t=0}^{T} &\left( y(t)-y_{\text{des}}(t) \right)^TQ \left( y(t)-y_{\text{des}}(t) \right)+u(t)^T S u(t) + \Delta u(t)^T R \Delta u(t) \end{aligned}\\\label{Eq:IO_constraints}
&\begin{aligned}
& \text{subject to},\,\forall t\in[0,T] \\
&\ba{lllll}
Cu(t)&\leq &g\\
L \Delta u(t)&\leq&f\\
E y(t)&\leq& p
\ea
\end{aligned}
\end{eqnarray}
where $y_{\text{des}}(t)\in \bR^{n_y}$ is the desired output reference, $Q \in \bR^{n_y\times n_y}$, $S\in \bR^{n_u\times n_u}$ and $R \in \bR^{n_u \times n_u}$ are positive semi-definite weighting matrices selected by the control designer, and $\Delta u(t)=u(t)-u(t-1)$ is the rate of change of the control input. The element-wise inequalities in \eqref{Eq:IO_constraints} define convex sets through the matrices $C\in\bR^{n_i\times n_u}$, $L\in\bR^{n_{\Delta_u}\times n_u}$, $E\in\bR^{n_o\times n_y}$ and the vectors $g\in\bR^{n_i}$, $f\in\bR^{n_{\Delta_u}}$, $p\in\bR^{n_o}$, where $n_i$, $n_{\Delta_u}$ and $n_o$ are the number of linear constraints on the inputs, input rates and outputs, respectively. We assume that the set defining the constraints on $\Delta u(t)$ contains the origin and that the constraint set of $u(t)$ is compact. This assumption is satisfied in most practical problems.
\begin{remark}\label{R:IIR}
Note that the influence of the unmodeled dynamics to the plant outputs can be embedded into the output disturbance vector $d(t)$. The facts that the magnitudes of the control inputs are bounded and that the controlled system is stable can be exploited to calculate the bounds on the contribution of the unmodeled dynamics to the plant outputs. 
\end{remark}
The bounds on the output disturbance and measurement noise, as well as the initial model set $\mathcal{F}(0)$ (see Assumptions  \ref{A:disturbance} and \ref{A:system_class}) are assumed to be known a priory. However, selecting the right model parametrization and estimating the initial bounds on the contribution of the unmodeled dynamics and bounds on the modeling parameters is not a trivial task. In the following section we provide a tutorial on how to select the right basis function parametrization and estimate the initial bounds on modeling parameters and contribution of the unmodeled dynamics.
\section{Selection of the basis functions and estimation of the bounds on the modeling parameter and the contribution of unmodeled dynamics}\label{S:Laguere}
Any stable system can be represented by its impulse response. The impulse response models have the advantage of being simple and straightforward to use. However, depending on the specific application, the required number of impulse response coefficients can be quite large. It is reasonable to expect that in the case when some additional information on the system to be controlled is available, such as the approximate location of the dominant poles, the number of coefficients that are needed to model the system can be significantly reduced. This kind of information can be captured well by using model representations given by orthonormal basis functions.\\
Any stable transfer function can be represented by the following infinite sum:
\beq
G(q)=\sum_{k=1}^{\infty}h(k)\mathcal{L}_k(a,q),
\eeq
where $q$ is the time shift operator ($qu(t)\doteq u(t+1)$) and $\mathcal{L}_k$ are mutually orthogonal transfer functions characterized by:
\beq\label{Eq:ortogonality}
\frac{1}{2\pi j}\oint \mathcal{L}_p(a,q)\mathcal{L}_k(a,q^{-1})\frac{dq}{q}=\delta_{pk},
\eeq
where $\delta_{pk}$ is the Kronecker delta function and the integral is around the unit circle. The parameter $a\in \mathbb{C},\, |a|\leq1$ defines the basis functions. The sequence of coefficients $h(k)$ is convergent as $\lim_{k\rightarrow\infty}h(k)=0$.\\
Basis transfer functions that are usually employed in practice are the Laguerre ones, given as:
\beq\label{Eq:Laguere}
\mathcal{L}_k(a,q)=\frac{\sqrt{1-a^2}}{q-a}\left[\frac{1-aq}{q-a}\right]^{k-1},
\eeq
where $a\in \bR,\, a\in[-1,1]$. If $a$ is selected such that it is close to the dominant pole of the system, only a few coefficients need to be identified in order to have a good model of the system \cite{Laguere}. For systems that exhibit oscillatory behavior, Kautz functions are more appropriate. These functions have the following form:
\beq\label{Eq:Kautz}
\begin{aligned}
\mathcal{L}_{2k-1}(a,q)&=\frac{\sqrt{1\!-\!c^2}(q\!-\!b)}{q^2\!+\!b(c\!-\!1)q\!-\!c}\left[ \frac{-cq^2\!+\!b(c\!-\!1)q\!+\!1}{q^2\!+\!b(c\!-\!1)q\!-\!c}\right]^{k\!-\!1}\\
\mathcal{L}_{2k}(a,q)&=\frac{\sqrt{1\!-\!c^2}(1\!-\!b^2)}{q^2\!+\!b(c\!-\!1)q\!-\!c}\left[ \frac{-cq^2\!+\!b(c\!-\!1)q\!+\!1}{q^2\!+\!b(c\!-\!1)q\!-\!c}\right]_,^{k\!-\!1}
\end{aligned}
\eeq
where $a\in \mathbb{C}, \, |a|\leq1$ and $b=\frac{a+a^*}{1+aa^*}$, $c=-aa^*$ with $^*$ denoting the conjugate of a complex number. Also in this case, the parameter $a$ should be chosen to be close the the dominant oscillatory pole of the system in order to have a system representation that uses only few parameters \cite{Kautz}.\\
The impulse response model can be regarded as a special case of a basis function parametrization. The basis functions that give rise to an impulse response model are:
\beq\label{Eq:impulse_response}
\mathcal{L}_k(a,q)=aq^{-k}
\eeq
In addition, it is possible to use a parametrization that combines the impulse response and Laguerre or Kautz basis functions. In this case the basis functions are defined as: 
\beq\label{Eq:combined}
\mathcal{L}_k(a,q)=\begin{cases} q^{-k} & \text{if}\, k\leq n\\
\mathcal{B}_{k-n}(a,q)q^{-n} & \text{if}\, k>n \end{cases},
\eeq
where $\mathcal{B}_k(a,q)$ are the Laguerre or Kautz basis functions as defined in \eqref{Eq:Laguere} and \eqref{Eq:Kautz} and $n$ is another design parameter that determines the number of the impulse response coefficients to be used in the modeling. Such a parametrization requires a small number of modeling parameters in the case when the system has a transport delay and if $n$ is selected close to the transport delay time  \cite{ydstie}. In addition to the listed basis functions that are the most commonly used in practice, generalized orthonormal basis functions can also be used \cite{orthogonal1}.\\
The main challenge when using a basis functions parametrization is the computation of the initial model set $\mathcal{F}(0)$ and of the bounds on the contribution of the truncated part of the basis function sequence to the plant output. The set $\mathcal{F}(0)$ can be constructed if an upper and a lower bound on each of the coefficients $h_{ji}(k),\,j=1,\hdots,n_y,\,i=1,\hdots,n_u,\, k=1,\hdots ,m$ is known. We present now a possible method to calculate the upper and the lower bound on each of the coefficients and the bound on the contribution of the unmodeled dynamics to the output for a single input-output transfer function, noting that the results can easily be extended to calculating the bounds for the whole MIMO system. We assume that the transfer function from the input $i$ to the output $j$ has the following structure:
\beq\label{Eq:Structure1}
G_{ji}(q)=gq^{-\tau}\frac{\prod\limits_{l=1}^{n_z}(q-z_l)}{\prod\limits_{l=1}^{n_p}(q-p_l)},
\eeq
where $g\in \bR$ is the gain, $\tau \in  \mathbb{N}$, $\tau \geq 0$ is the transport delay, $Z=[z_1,\hdots,z_{n_z}]^T\in \bR^{n_z}$ are the zeros and $P=[p_1,\hdots, p_{n_p}]^T\in \bR^{n_p}$ are the poles of the transfer function.
\begin{remark}
Note that real poles and zeros are selected here in order to simplify the notation. All the derivations can be extended to the case of general complex poles and zeros.
\end{remark}
We assume that the gain, transport delay, poles and zeros are not exactly known, but they belong to the following set:
\beq\label{Eq:set_of_parameters}
\ba{ccccc}
\underline{g}&\leq & g &\leq &\overline{g}\\
\underline{\tau} & \leq & \tau & \leq & \overline{\tau}\\
\underline{Z}&\leq & Z&\leq &\overline{Z}\\
\underline{P} &\leq & P &\leq &\overline{P}
\ea,
\eeq
where the inequalities in \eqref{Eq:set_of_parameters} should be interpreted as element-wise inequalities and the bounds $\underline{g},\, \overline{g}, \, \underline{\tau}, \, \overline{\tau} \in \bR$, $\underline{Z},\, \overline{Z}\in \bR^{n_z}$ and $\underline{P},\, \overline{P}\in \bR^{n_p}$ are assumed to be known. In addition, it is assumed that $\overline{\tau}\geq 0$ and that the bounds on the poles are selected such that all the poles of the system lay inside the unit circle. Under these assumptions, the goal is, for a given values of parameter $a$ and the model order $m$, to find the upper and lower coefficient bounds $\overline{h}_{ji}(k)$ and $\underline{h}_{ji}(k), \, k=1,\hdots, m$ such that it holds:
\beq\label{Eq:coeficient_bounds}
\underline{h}_{ji}(k) \leq h_{ji}(k) \leq \overline{h}_{ji}(k), \, k=1,\hdots, m.
\eeq
To this end, we denote by $\Psi(g,\tau,Z,P,l),\, l=1,\hdots, \infty$ the impulse response coefficients of the transfer function \eqref{Eq:Structure1} for different values of $g$, $\tau$, $Z$ and $P$. Then for any fixed $g$, $\tau$, $Z$ and $P$, the impulse response coefficients $\Psi (g,\tau, Z,P,l)$ can be represented as:
\beq\label{Eq:representation_impulse}
\Psi(g,\!\tau,\!Z,\!P,\!l)=\sum\limits_{k=1}^{\infty}h(g,\!\tau,\!Z,\!P,\!k)\Psi_k(a,\,l), \!\forall l=1,\,\hdots,\, \infty,
\eeq
where, because of the orthogonality of the basis functions, the coefficients $h(g,\tau,Z,P,k)$ are given as normalized projections of $\Psi(g,\tau,Z,P,l)$ on the impulse responses of the basis functions:
\beq\label{Eq:Projection}
h(g,\tau,Z,P,k)=\frac{\sum\limits_{l=1}^{\infty}\Psi(g,\tau,Z,P,l)\Psi_k(a,l)}{\sum\limits_{l=1}^{\infty}\Psi_k(a,l)^2}.
\eeq
Therefore, the upper and the lower bounds \eqref{Eq:coeficient_bounds} can be calculated as:
\beq\label{Eq:finding_bounds}
\begin{aligned}
\overline{h}_{ji}(k)&=\sup\limits_{g,\tau,Z,P} h(g,\tau,Z,P,k)\\
\underline{h}_{ji}(k)&=\inf \limits_{g,\tau,Z,P}h(g,\tau,Z,P,k)\\
&\text{subject to \eqref{Eq:set_of_parameters}},
\end{aligned}
\eeq
The contribution of the unmodeled dynamics on the transfer function from input $i$ to the output $j$ is then given by:
\beq
\eta_{ji}(t)=\sum\limits_{k=m+1}^{\infty}h_{ji}(k)\zeta \left(\mathcal{L}_k(a,q)u_i(t)\right).
\eeq
The contribution of the unmodeled dynamics is guaranteed to be bounded: $|\eta(t)|\leq \overline{\eta_{ji}}, \, \forall t$, where the bound $\overline{\eta_{ji}}$ is given by:
\beq\label{Eq:unmodeled_dynamics_bound}
\begin{aligned}
&\overline{\eta}_{ji}=\overline{u_i} \sup\limits_{g,\tau, Z,P}\sum\limits_{l=1}^{\infty}\left|\Psi(g,\!\tau,\! Z,\! P,\! l)-\sum\limits_{k=1}^mh(g,\!\tau,\! Z,\! P,\! k)\Psi_k(a,\! l)\right|\\
&\text{subject to \eqref{Eq:set_of_parameters}}
\end{aligned}
\eeq
where $\overline{u}_i$ can be calculated as:
\beq\label{Eq:max_u_i}
\overline{u}_i=\max\limits_{Cu\leq g}|u_i|
\eeq
Note that \eqref{Eq:max_u_i} can be computed by solving an LP. However, \eqref{Eq:finding_bounds} and \eqref{Eq:unmodeled_dynamics_bound} are in general very difficult to solve as they are infinite dimensional nonlinear optimization problems. However, the fact that the impulse response coefficients exponentially converge to zero can be used to approximate the infinite sums in \eqref{Eq:Projection} and \eqref{Eq:unmodeled_dynamics_bound} by finite sums of appropriate length. The optimization problems  \eqref{Eq:Projection} and \eqref{Eq:unmodeled_dynamics_bound} would then be transformed into finite dimensional optimization problems that could be tackled by griding the box defined in \eqref{Eq:set_of_parameters}. In this approach the values of $h(g,\tau,Z,P,k),\, k=1,\hdots,m$ would be calculated for each of the grid points and the optimization problems could be approximately solved by selecting the grid point that results in the best value of the cost function as the optimum.\\
Such a solution would be suboptimal and the calculated bounds could be further inflated in order to account for the considered approximations.
\section{Adaptive control algorithm}\label{S:Overall_algorithm}
Since the true plant is not exactly known and its outputs are subject to unknown output disturbances, the optimal control problem \eqref{Eq:problem_cost} can not be exactly solved a priori and a suboptimal approach has to be sought. Therefore, in order to approximately optimize the given control objective, while guaranteeing satisfaction of the constraints \eqref{Eq:IO_constraints}, we propose the use of a receding horizon approach, combined with an adaptive control scheme that aims to improve the knowledge on the system's dynamics over time. In this setting, at each time step a sequence of future control inputs is calculated and only the first element of this sequence is applied to the plant. In particular, to guarantee output constraint satisfaction, we aim to identify, at each time step, the set of all the models that are consistent with the initial assumptions on the real plant and the input-output measurements collected up to that time step (model set). If the prior assumptions are valid, this set is guaranteed to contain also the true plant's dynamics. Then, the control computation is carried out in such a way to ensure that the constraints are satisfied for all the models inside this set, hence also for the actual plant.\\
In order to accomplish the model set identification and the robust control computation, we rely on a recursive SM identification algorithm, and an MPC controller. The identification algorithm is such that the model set can be recursively refined with each new output measurement. In addition to the model set, the identification algorithm also provides a nominal model of the plant at each time step.\\
The control input is calculated by solving an optimal control problem that minimizes the weighted $l_2$ norm of the tracking error for the nominal model over a finite horizon, while at the same time satisfying robustly the constraints \eqref{Eq:IO_constraints}.
\begin{remark}\label{R:why_nominal_model}
Note that an alternative to formulating the cost function based on the tracking error of the nominal model would be to minimize the weighted $l_2$ norm of the worst case tracking error for all the models inside the model set. The drawback of this approach is that if the prior assumptions were not informative enough, the model set would be initially large due to high uncertainty, and then the applied control inputs would be very small in magnitude. Such control inputs would in turn not be very informative, therefore the size of the model set would reduce slowly, hence resulting in a very conservative controller.
\end{remark}
The structure of the described adaptive control algorithm is summarized in Algorithm \ref{Alg:RHCOP2}.
% A block diagram of the proposed adaptive controller is shown in Fig. \ref{Fig:Principal_schematic}.\\
%\begin{figure}[!hbt]
%\centerline{
%\includegraphics[bbllx=29mm,bblly=83mm,bburx=196mm,bbury=212mm,width=10.00cm,clip]{adaptive_principal_schematic_corrected_MIMO2.eps}}
%\caption{Block diagram of the proposed adaptive control algorithm.}
%\label{Fig:Principal_schematic}
%\end{figure}
\begin{algorithm}
\begin{itemize}
  \item[1)]At time step $t$, obtain $\tilde{y}(t)$ and update the model set based on the past applied control inputs and measured plant outputs;
\item[2)] Select a nominal model of the plant inside the model set;
  \item[3)]Calculate a sequence of possible future control inputs by solving a finite horizon optimal control problem (FHOCP) that minimizes the weighted $l_2$ norm of the tracking error for the nominal model and enforces input and output constraints for all the models inside the model set;
  \item[4)]Apply the first element of the calculated input sequence, set $t=t+1$, go to 1).
\end{itemize}
\caption{Adaptive MPC algorithm}\label{Alg:RHCOP2}
\end{algorithm}
In the subsections that follow, each of the components of the proposed adaptive control algorithm is described in detail.
\subsection{Real-time set membership identification}\label{S:SM_ID}
We denote the sequence of the input-output data collected up to time step $t$ as:
\beq\label{Eq:measured_data}
\{\varphi(l),\,\tilde{y}(l)\}_{l=0}^{t},
\eeq
where $\varphi(l)\in \bR^{n_um}$ is the regressor vector formed by the control inputs applied up to time step $l-1$, and $\tilde{y}(l)\in \bR^{n_y}$ is the corresponding measured plant output. The regressor vector $\varphi(l)$ can be calculated recursively as:
\begin{equation}\label{Eq:recursive_regressor}
\varphi(l)=W\varphi(l-1)+Zu(l-1),
\end{equation}
where $W\in \bR^{n_um \times n_um}$ and $Z \in \bR^{n_um\times n_u}$ are suitable matrices that depend on the selected basis function parametrization. For completeness, matrices $W$ and $Z$ are given in the Appendix \ref{S:Matrices_reg} for the case of impulse response and Laguerre basis function models.\\
At a given time step $t$, we define the model set $\mathcal{F}(t)$ as the set containing all the matrices $H$ that are consistent with the Assumptions \ref{A:disturbance} and \ref{A:system_class} and the collected input-output data \eqref{Eq:measured_data}:
\beq\label{Eq:FSSm_t}
\begin{aligned}
\mathcal{F}(t)\!\doteq\!
\left\{\ba{ll}
H\!\in\! \mathcal{F}(0)\!:\!&-\epsilon_d\!-\!\epsilon_v\leq \tilde{y}(l)\!-\!H\varphi(l)\leq \epsilon_d\!+\!\epsilon_v,\,\forall l \in [0,t]
\ea\right\}\!\!.
\end{aligned}
\eeq
Each one of the element-wise inequalities in \eqref{Eq:FSSm_t} comes from the fact that the discrepancy between the measured and the predicted values of the output can not exceed the disturbance and noise bounds \eqref{Eq:dist_bound}.\\
Since the initial model set $\mathcal{F}(0)$ is defined by polytopic constraints on each row $H_j^T$ of the matrix $H$, and the constraints in \eqref{Eq:FSSm_t} are linear, the model set $\mathcal{F}(t)$ is still defined by polytopic constraints on $H_j^T,\, j=1,\hdots n_y$. Each of these polytopes can be uniquely described by a set of non-redundant inequalities. Therefore, at a generic time step $t$, the model set $\mathcal{F}(t)$ can be represented as:
%The sets $\mathcal{F}_{j}(t)$ can be equivalently written as
%\beq\label{Eq:FSSm_t_polytope}
%\mathcal{F}_{j}(t)\!=\!\left\{
%H_{j}\in\bR^{n_um}:\ba{llll}
%H_{j}\!&\!\leq\!&H_j^\text{max}\\
%H_{j}\!&\!\geq\!&H_j^\text{min}\\
%\tilde{\Phi}_j(t)H_{j}\!&\!\leq\!&\overline{Y}_j(t)\\
%\tilde{\Phi}_j(t)H_{j}\!&\!\geq\!&\underline{Y}_j(t)
%\ea
%\!\right\}\!,
%\eeq
%where the matrix $\tilde{\Phi}_j(t)\in \bR^{t\times n_um}$ and the vectors $H_j^{\text{max}}\in\bR^{n_um}$, $H_j^{\text{min}}\in\bR^{n_um}$, $\overline{Y}_j(t)\in \bR^{t}$ and %$\underline{Y}_j(t)\in \bR^{t}$ can be easily computed (for completeness they are given in the Appendix \ref{S:Mat_def}). The sets $\mathcal{F}_{j}(t)$ in \eqref{Eq:FSSm_t_polytope} %are closed and bounded sets defined by linear inequalities, i.e. polytopes, that can be uniquely described by a set of non-redundant inequalities:

\beq\label{Eq:FSS_representation}
\mathcal{F}(t)\!=\! \left\{H\!\in\!\bR^{n_y \times n_um}\!:\!A_j(t)H_{j}\leq b_j(t),\, j=1,\hdots, n_y \right\},
\eeq
where $A_j(t) \in \bR^{r_j(t) \times n_um}$, $b_j(t) \in \bR^{r_j(t)}$ and $r_j(t)$ is the number of non-redundant inequalities pertaining to the $j^\text{th}$ row of the matrix $H$.\\
The matrices $A_j(t)$ and the vectors $b_j(t)$ have to be updated at each time step in order to account for the new measurements. To this end, let us consider the following polytopes:
\beq\label{Eq:definition_of_polytope}
\mathcal{F}_j(t)=\{H_j\in \bR^{n_um}: A_j(t)H_j\leq b_j(t)\},\, j=1\hdots n_y.
\eeq
We note that for each $j$, the polytope $\mathcal{F}_{j}(t)$ can be calculated recursively in time as the intersection of the polytope $\mathcal{F}_{j}(t-1)$ and the two half spaces defined by the newly measured plant output, $\tilde{y}_j(t)$:
\beq\label{Eq:Normal_update}
\begin{aligned}
\mathcal{F}_{j}(t)=
&\mathcal{F}_{j}(t-1) \\
&\cap  \{H_{j}\!\!\in\!\bR^{n_um}\!:\!\varphi(t)^TH_{j}\leq \tilde{y}_j(t)\!+\!\epsilon_{d_j}\!+\!\epsilon_{v_j}\}\\
&\cap  \{H_{j}\!\!\in\!\bR^{n_um}\!:\! -\varphi(t)^TH_{j}\leq -\tilde{y}_j(t)\!+\!\epsilon_{d_j}\!+\!\epsilon_{v_j}\}.
\end{aligned}
\eeq
The matrix $A_j(t)$ and the vector $b_j(t)$ can then be calculated by removing any redundant faces of the polytope $F_j(t)$. This can be done by solving an LP for each face of the polytope, in order to determine whether it is redundant or not \cite{Polytopi1}. A problem of the described recursive update is that the number of faces of $\mathcal{F}_{j}(t)$, $r_j(t)$, can become arbitrarily large,  as in general it grows linearly with time, and hence the memory needed to store $A_j(t)$ and $b_j(t)$ can become impractical. In order to overcome this problem, limited complexity polytopic update algorithm similar to the one proposed in \cite{limited_poly_update} can be used.\\
In addition to the model set, the proposed SM identification algorithm also provides a nominal model of the plant (step 2) of Algorithm \ref{Alg:RHCOP2}). The latter is given by a matrix $H_{c}(t)\in \bR^{n_y\times n_um}$, $H_c=[H_{c,1},\hdots H_{c,n_y}]^T$, where $H_{c,j}(t)\in \bR^{n_um},\, j=1,\hdots, n_y$ are computed as the centers of the maximum volume $l_2$-norm balls inscribed in the polytopes $\mathcal{F}_{j}(t)$. This can be done by solving an LP, however the solution is not unique in general. Therefore, we introduce a regularization term, that penalizes the deviation of the new nominal model from the previous one, giving rise to the following LP:
\beq\label{Eq:central_estimate}
\ba{c}
\max \limits_{\xi_j(t),H_{c,j}(t)} \sum \limits_{j=1}^{n_y}\xi_j(t)-\alpha\|H_{c,j}(t-1)-H_{c,j}(t)\|_1\\
\text{subject to}\\
a_{ji}(t)H_{c,j}(t)\!+\!\xi_j(t)\|a_{ji}(t)\|_2\!\leq\! b_{ji}(t), \, j=1,\hdots, n_y,\, i=1,\hdots, n_u,
\ea
\eeq
 where $\xi_j(t) \in \bR$ is the radius of the maximum volume ball inscribed in $\mathcal{F}_{j}(t)$, $\alpha>0$ is a design variable, and $a_{ji}(t)$ and $b_{ji}(t)$ stand for the $i^{\text{th}}$ row of the matrix $A_j(t)$ and the vector $b_j(t)$.\\
Initially, at time step $t=0$, the matrix $H_{c}(0)$ can be taken as an arbitrary nonzero point inside the set $\mathcal{F}(0)$.
\subsection{Constrained predictive controller}\label{S:Control}
Let $u(k|t),\, k\in[t,t+N-1]$, $N\geq m$,  be the possible future control moves, where the notation $k|t$ indicates the prediction at step $k\geq t$ given the information at the current step $t$. Similarly, we define the vectors of future input increments $\Delta u(k|t), \, k\in [t,t+N-1]$ as:
\beq
\!\Delta u(k|t)\!=\!\begin{cases} u(t|t)\!-\!u(t-1)\!\! & \text{if}\, k=t\\
u(k|t)\!-\!u(k\!-\!1|t)\!\! & \text{if}\, t\!+\!1\leq k\leq t\!+\!N\!-\!1 \end{cases}
\eeq
Moreover, we define the future regressor vectors $\varphi(k|t)\in\bR^{n_um},\,k\in[t+1,t+N]$ as:
\begin{equation}\label{Eq:predicted_regressors}
\varphi(k|t)\!=\!\!\begin{cases} W\varphi(t)\!\!+\!\!Zu(t|t)\!\!\! & \text{if} \,\, k=t+1\\
                                                 W \varphi(k\!-\!1|t)\!\!+\!\!Zu(k\!-\!1|t) \!\!\!& \text{if} \,\, t\!+\!2\!\leq\! k\!\leq\! t\!+\!N.
\end{cases}
\end{equation}
In addition, we define the current prediction error $\hat{d}(t)\in \bR^{n_y}$ as the difference between the measured plant output and the one predicted by the nominal model at the time step $t$:
\beq\label{Eq:predicion_error}
\hat{d}(t)\doteq \tilde{y}(t)-H_c(t)\varphi(t).
\eeq
Then, we consider the following cost function:
\beq \label{Eq:cost_function}
\begin{aligned}
J(U,\tilde{y}(t),\varphi(t))\doteq \sum\limits_{k=t}^{t\!+\!N\!-\!1}\!&\!\left(\hat{y}(k\!+\!1|t)\!-\!y_{\text{des}}(k\!+\!1|t)\right)^TQ(\hat{y}(k\!+\!1|t)-\!y_{\text{des}}(k\!+\!1|t))\!+\! u(k|t)^TSu(k|t)\\&+\Delta u(k|t)^TR\Delta u(k|t),
\end{aligned}\!\!
\eeq
where:
\beq\label{Eq:predicted_output}
\hat{y}(k+1|t)=H_c(t)\varphi(k+1|t)+\hat{d}(t).
\eeq
In \eqref{Eq:cost_function}, $U=[u(t|t) \ldots u(t+N-1|t)]$ are the decision variables, while $\tilde{y}(t)$ and $\varphi(t)$ are  known parameters. $y_\text{des}(k|t),\,k\in[t+1,t+N]$, are the predicted values of the desired output. The introduction of the disturbance estimate $\hat{d}(t)$ in the cost function enables offset free tracking. Moreover, if the nominal model of the plant $H_c(t)$ were equal to the real plant, the measurement noise $v(t)$ were zero, and the output disturbance $d(t)$ were constant, for $N=T$, minimizing the cost function \eqref{Eq:cost_function} would be equivalent to minimizing the cost function of the control objective \eqref{Eq:problem_cost}.\\
Satisfaction of input constraints can be enforced by the following set of inequalities:
\beq\label{Eq:input_rate}
\ba{ll}
\begin{aligned}
C u(k|t) & \leq & d\\
L \Delta u(k|t) & \leq & f
\end{aligned} & \forall k\in[t,t+N-1].
\ea
\eeq
And the robust satisfaction of the output constraints can be achieved by enforcing the latter for all the plants inside the model set $\mathcal{F}(t)$ and all disturbance realizations:
\beq\label{Eq:output1}
E(H\varphi(k|t)\!+\!d)\leq p, \, \ba{l}\forall H\! \in\! \mathcal{F}(t),\, \forall d: -\epsilon_d\leq d\leq \epsilon_d,\\ \forall k\in[t\!+\!1,t\!+\!N] \ea
\eeq
The constraints \eqref{Eq:output1} are satisfied for anny disturbance realization if the following set of inequalities is satisfied: 
\beq\label{Eq:output}
EH\varphi(k|t)\!+\!\overline{d}\leq p, \, \forall H\! \in\! \mathcal{F}(t),\, \forall k\in[t\!+\!1,t\!+\!N],
\eeq
where $\overline{d}=[\overline{d}_1,\hdots, \overline{d}_{n_o}]^T$, and $d_l \in \bR,\, l=1,\hdots, n_o$ are given as:
\beq\label{Eq:iter_LP}
\overline{d}_l=\sum\limits_{j=1}^{n_y}|e_{lj}|\epsilon_{d_j},
\eeq
where $e_{lj}$ stand for the element of the $l^\text{th}$ row and $j^\text{th}$ column of the matrix $E$. However, using the constraints \eqref{Eq:output} would result in an infinite dimensional optimization problem, that is in general very hard to solve. The following result shows how \eqref{Eq:output} can be equivalently written in the form of linear constraints.
\begin{lemma}\label{L:constraint_equivalence_1}
The constraints (\ref{Eq:output}) are satisfied if and only if the following set of inequalities is feasible:
\small
\beq\label{Eq:relaxed_constraints1}
\left.\ba{l}\!\!
A(t)^T\lambda_l(k|t)=\left[\ba{c}
e_{l1}\varphi(k|t)\\
\vdots\\
e_{ln_y}\varphi(k|t)
\ea\right]\\
b(t)^T\lambda_l(k|t)\leq p_l\!-\!\overline{d}_l\\
\lambda_l(k|t)\geq \bold 0
\ea \right\}\ba{cc}\forall l=1,\hdots,n_o\\
\forall k\!\in\![t\!+\!1,t\!+\!N]\ea
\eeq
\normalsize
with
\small
\beq\label{Eq:explanation}
\begin{aligned}
A(t)&=\left[\ba{cccc} A_1(t)& \bold 0& \ldots&\bold 0\\
\bold 0& A_2(t)& \ldots& \bold 0\\
\vdots& \vdots & \ddots & \vdots \\
\bold 0&\bold 0& \ldots&A_{n_y}(t)\ea \right]\\
b(t)&=\left[\ba{c}
b_1(t)\\
\vdots\\
b_{n_y}(t)
\ea \right],
\end{aligned}
\eeq
\normalsize
where $\bold 0$ represents zero matrices of appropriate dimensions and $j^\text{th}$ column of the matrix $E$ and $p_l$ is the $l^\text{th}$ element of the vector $p$. In \eqref{Eq:relaxed_constraints1}, $\lambda_l(k|t) \in \bR^{r(t)}$ are additional decision variables, where $r(t)=\sum_{j=1}^{n_y} r_j(t)$.
\end{lemma}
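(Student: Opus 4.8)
The plan is to reformulate the semi-infinite constraint \eqref{Eq:output} — which for each fixed $k$ and each row index $l$ asks that a linear functional of $H$ stay below a threshold \emph{for every} $H$ in the polytope $\mathcal{F}(t)$ — as a finite set of linear constraints via LP duality. The key observation is that, because the matrix $E$ multiplies $H\varphi(k|t)$, the $l^{\text{th}}$ output constraint in \eqref{Eq:output} reads $\sum_{j=1}^{n_y} e_{lj}\,\varphi(k|t)^T H_j + \overline{d}_l \le p_l$, so the worst case over $\mathcal{F}(t)$ decouples across the rows $H_j$: since $\mathcal{F}(t) = \mathcal{F}_1(t)\times\cdots\times\mathcal{F}_{n_y}(t)$ with $\mathcal{F}_j(t) = \{H_j : A_j(t)H_j \le b_j(t)\}$, we have $\sup_{H\in\mathcal{F}(t)} \sum_j e_{lj}\,\varphi(k|t)^T H_j = \sum_j e_{lj}\,\sup_{H_j\in\mathcal{F}_j(t)} \varphi(k|t)^T H_j$. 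Stacking these into the block-diagonal $A(t)$ and stacked $b(t)$ of \eqref{Eq:explanation}, the whole constraint \eqref{Eq:output} for a fixed $(l,k)$ becomes: the optimal value of the LP $\max\{ c^T x : A(t) x \le b(t)\}$, with $c = [\,e_{l1}\varphi(k|t);\ \ldots;\ e_{ln_y}\varphi(k|t)\,]$ and $x$ ranging over $\bR^{n_y n_u m}$, is at most $p_l - \overline{d}_l$.

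Next I would invoke LP strong duality. The dual of $\max\{c^T x : A(t)x \le b(t)\}$ is $\min\{ b(t)^T\lambda : A(t)^T\lambda = c,\ \lambda \ge \bold 0\}$. One must first note that the primal is feasible and bounded: feasibility holds because $\mathcal{F}(t)$ is nonempty whenever the prior assumptions are consistent with the data (it contains the true plant), and boundedness holds because each $\mathcal{F}_j(t)$ is a bounded polytope — indeed $\mathcal{F}_j(0)$ was assumed compact in Assumption~\ref{A:system_class}, and intersecting with half-spaces preserves compactness. Hence strong duality applies with no gap and the dual optimum is attained. Therefore ``$\sup_{H\in\mathcal{F}(t)} c^T H \le p_l - \overline{d}_l$'' holds if and only if the dual \emph{minimum} is $\le p_l - \overline{d}_l$, which in turn holds if and only if there \emph{exists} a dual-feasible $\lambda_l(k|t)$ — i.e. one satisfying $A(t)^T\lambda_l(k|t) = c$ and $\lambda_l(k|t)\ge \bold 0$ — whose objective $b(t)^T\lambda_l(k|t)$ is $\le p_l - \overline{d}_l$. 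That is precisely the system \eqref{Eq:relaxed_constraints1}. Quantifying over all $l \in \{1,\ldots,n_o\}$ and all $k \in [t+1, t+N]$ gives the equivalence claimed in the lemma, with the dimension count $r(t) = \sum_{j=1}^{n_y} r_j(t)$ coming from the block structure of $A(t)$.

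The main subtlety — the step I would be most careful about — is the logical form of the ``if and only if.'' The claim is not that the minimax value equals something, but that a \emph{universally} quantified constraint (hold for all $H$) is equivalent to an \emph{existentially} quantified one (exists a $\lambda$); this is exactly what strong duality buys, but it relies on attainment of the dual optimum, which is why establishing primal feasibility and boundedness (hence compactness of $\mathcal{F}(t)$, traced back to Assumption~\ref{A:system_class}) is not a throwaway remark but the load-bearing hypothesis. A secondary point worth stating explicitly is the product structure $\mathcal{F}(t) = \prod_j \mathcal{F}_j(t)$, which makes the block-diagonal $A(t)$ the correct constraint matrix; without the decoupling induced by the Kronecker-like action of $E$ on the rows of $H$, one could not write $A(t)$ in the clean block form of \eqref{Eq:explanation}. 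The rest — writing down the dual, matching symbols with \eqref{Eq:relaxed_constraints1} — is routine bookkeeping.
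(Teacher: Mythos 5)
Your proposal is correct and follows essentially the same route as the paper's own proof: reformulate the semi-infinite constraint for each fixed $(l,k)$ as a bound on the optimal value of the primal LP over the (nonempty, bounded) polytopes $\mathcal{F}_j(t)$, pass to the dual, and use strong duality with attainment to convert the universally quantified constraint into the existence of a dual-feasible $\lambda_l(k|t)$ satisfying \eqref{Eq:relaxed_constraints1}. Your explicit remarks on the product structure $\mathcal{F}(t)=\prod_j\mathcal{F}_j(t)$ and on why attainment of the dual optimum is the load-bearing step are sound elaborations of points the paper treats more tersely, not a different argument.
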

\begin{proof}
%\emph{Proof:}\\
We first note that, from the definition of the set $\mathcal{F}(t)$, it follows that constraints \eqref{Eq:output} are satisfied if and only if the following set of inequalities is satisfied:
\beq\label{Eq:LP1}
\ba{ll}
\gamma_l(k)\leq p_l\!-\!\overline{d}_l, & \ba{c} \forall l=1,\hdots,n_o\\\forall k\in [t+1,t+N]\ea\ea
\eeq
where
\beq\label{Eq:Linear_proggrams_1}
\gamma_l(k)=\max \limits_{A_j(t)H_j\leq b_j(t)}\sum\limits_{j=1}^{n_y}e_{lj}\varphi(k|t)^TH_j.
\eeq
For fixed values of the vectors $\varphi(k|t), \, k\in [t+1,t+N]$, by using the fact that the inequalities $A_j(t) H_j\leq b_j(t),\, j=1,\hdots, n_y$ form nonempty, closed and bounded convex sets (i.e. polytopes), we can write the dual of the LP \eqref{Eq:Linear_proggrams_1} as:
\bea
&\tilde{\gamma}_l(k)=\min \limits_{\lambda_l(k|t)}b(t)^T\lambda_l(k|t)\\
&\text{subject to} \nonumber\\\label{Eq:LP1c1}
&A(t)^T\lambda_l(k|t)=\left[ \ba{c}
e_{l1}\varphi(k|t)\\
\vdots\\
e_{ln_y}\varphi(k|t)\ea\right]\\\label{Eq:LP1c2}
&\lambda_l(k|t)\geq \bold 0 .
\eea
According to the strong duality theorem for LPs \cite{LP_book}, it holds that: $\gamma_l(k)=\tilde{\gamma}_l(k)$. Therefore, for any $\lambda_l(k|t)$ that satisfies the constraints (\ref{Eq:LP1c1}) and \eqref{Eq:LP1c2}, it holds that $\gamma_l(k) \leq b(t)^T\lambda_l(k|t)$. Hence the existence of $\varphi(k|t)$ and $\lambda_l(k|t)$ that satisfy the set of constraints (\ref{Eq:relaxed_constraints1}) guarantees that the constraints (\ref{Eq:LP1}) are also satisfied, which implies the satisfaction of the original constraints \eqref{Eq:output}. On the other hand if the constraints \eqref{Eq:output} are satisfied, then there exists $\gamma_l(k)$ satisfying \eqref{Eq:LP1}. Then by the strong duality theorem for LP, $\tilde{\gamma}_l(k)=\gamma_l(k)$ exists and hence the constraints \eqref{Eq:LP1c1} and \eqref{Eq:LP1c2} have to be feasible, which implies the feasibility of \eqref{Eq:relaxed_constraints1}. \hfill$\blacksquare$
\end{proof}
In order to be able to recursively satisfy the input and output constraints, we introduce an additional constraint on the terminal stage:
\beq\label{Eq:end_constraint}
\varphi(t+N|t)=W\varphi(t+N|t)+Zu(t+N-1|t).
\eeq
This means that we require the terminal regressor to correspond to a steady state.\\
For fixed values of $N$, $Q$, $S$ and $R$, we can now define the finite horizon optimal control problem (FHOCP) to be solved at time step $t$ (see step 3) of Algorithm \ref{Alg:RHCOP2}):
\beq \label{Eq:FHOCP}
\begin{aligned}
\min
\limits_{U} &J(U,\tilde{y}(t),\varphi (t))\\
&\text{subject to}\, \eqref{Eq:input_rate},\, \eqref{Eq:relaxed_constraints1}, \, \eqref{Eq:end_constraint},
\end{aligned}
\eeq
which is a quadratic program (QP), that can be efficiently solved.\\
The proposed adaptive control algorithm is such that if the problem \eqref{Eq:FHOCP} that is solved under Algorithm \ref{Alg:RHCOP2} has a feasible solution at time step $0$, then it is guaranteed to have a feasible solution $\forall t\geq 0$. This means that the proposed adaptive control algorithm guarantees the satisfaction of both input and output constraints $\forall t$.
%Finally we summarize the adaptive control algorithm that combines the presented SM identification and receding horizon optimal control algorithms:\\
%\begin{algorithm}\label{Alg:RHCOP2}(Adaptive MPC)
%\begin{itemize}
 % \item[1)]At time step $t$, update the polytopes $\mathcal{F}_{j}(t), j=1,\hdots,n_y$ based on $\tilde{\varphi}(t)$ and $\tilde{y}(t)$, by using the Algorithm \ref{A:limited_complexity};
%\item[2)]Find $H_{c,j}(t), j=1\hdots,n_y$ by solving the linear programs \eqref{Eq:central_estimate};
 % \item[3)]Solve the problem \eqref{Eq:FHOCP}, let $u(k|t)^*, k\in [t,t+N-1]$ be the computed control sequence;
 % \item[4)]Apply $u(t)=u^*(t|t)$, set $t=t+1$, go to 1).
%\end{itemize}\hfill$\blacksquare$
%\end{algorithm}
%\end{remark}
\section{Adding an exploring property to the control algorithm}\label{S:for_identification}
The proposed adaptive control algorithm relies on the assumption that the discrepancy between the nominal and the actual model of the plant results in control inputs that are informative, such that over time the collected input-output data will reduce the size of the model set $\mathcal{F}(t)$ and therefore improve the accuracy of the identified plant model. The approach does not require a persistence of excitation assumption to avoid numerical problems, unlike other approaches based on least squares \cite{PE_ref2}. Nevertheless, in order to achieve good performance, the applied control inputs should be informative enough such that the model set $\mathcal{F}(t)$ becomes small as quickly as possible. A possible method to add an exploring property to the control algorithm is to split the calculation of the control input in two stages. In the first stage, the FHOCP \eqref{Eq:FHOCP} is solved as usual. The computed optimal control sequence and the knowledge of the model set $\mathcal{F}(t)$ are then used to calculate the upper bounds, along the chosen prediction horizon, on the absolute difference between all the possible future outputs of the plant and the nominal optimal output trajectory. In the second stage, by allowing these bounds to be inflated by a factor selected by the control designer, the sequence of control inputs can then be recalculated in order to improve the reduction in size of the model set. To this end, the control input is selected such that it approximately minimizes the volume of the polytopes that will form the model set at the next time step, while at the same time ensuring that the difference between the future trajectory of the plant outputs and the optimal one calculated in the first stage remains inside the selected bounds, and that the input and output constraints are satisfied.\\
With this approach, the relative importance of reference tracking and identification is automatically linked to the amount of information available on the system, which is represented by the size of the model set. In fact, if the model set $\mathcal{F}(t)$ is large, the input and trajectories computed at the second stage will be allowed to significantly deviate from the ones calculated in the first stage, in order to generate a control input that is informative and reduces the size of the model set $\mathcal{F}(t)$. On the other hand, if the uncertainty is small, the future plant output will be allowed to change only slightly from the first to the second stage.\\
To be more specific, let us consider the solution of the FHOCP \eqref{Eq:FHOCP}, which constitutes the first stage of the described approach. We denote the predicted regressor vectors and plant outputs obtained by solving \eqref{Eq:FHOCP} by $\varphi'(k|t)$ and $\hat{y}'(k|t),\, k\in[t+1,t+N]$. Then for the second stage, we compute the following quantities:
\beq\label{Eq:bounds1}
\begin{aligned}
\overline{\epsilon}_j(k|t)=&\max \left\{ \overline{y}_j(k|t)-\hat{y}'_j(k|t),\, \hat{y}'_j(k|t)-\underline{y}_j(k|t) \right \}, k\in [t+1,t+N],\,\, j=1,\hdots, n_y,
\end{aligned}
\eeq
where $\overline{\epsilon}(k|t)=[\overline{\epsilon}_1(k|t),\hdots, \overline{\epsilon}_{n_y}(k|t)]^T$, $\overline{\epsilon}_j(k|t)\geq 0$ denotes the maximal possible difference between the possible future output of the plant and the predicted output at time step $k$, and 
\beq
\ba{ccc}
\overline{y}_j(k|t)&=&\max\limits_{A_j(t)H_j\leq b_j(t)} H_j^T\varphi'(k|t)\\
 \underline{y}_j(k|t)&=& \min\limits_{A_j(t)H_j\leq b_j(t)} H_j^T\varphi'(k|t). \ea
\eeq
In addition, we define the matrix $\Phi(t+1|t)\in \bR^{n_um\times n_um}$ that depends on the $n_um$ past regressor vectors and the first future regressor vector as:
\beq
\Phi(t+1|t)=\left[\ba{cccc} \varphi(t-n_um+1) &\hdots & \varphi(t) & \varphi(t+1|t) \ea \right].
\eeq
 This matrix can be indirectly related to the size of the polytopes that will form the model set at the next time step $\mathcal{F}(t+1)$ by the following result which is taken from \cite{id_aspect}.
\begin{lemma}\label{L:volume}
Each of the polytopes $\mathcal{F}_j(t+1),\, j=1,\hdots,n_y$ obtained by using the polytopic update of the form \eqref{Eq:Normal_update} for a sequence of control input vectors that define the regressor vectors forming the matrix $\Phi(t+1|t)$ is guaranteed to have the volume smaller than $\frac{\left(2(\epsilon_{d_j}+\epsilon_{v_j})\right)^{n_um}}{\left| \det\left( \Phi(t+1|t)\right)\right|}$.\hfill$\blacksquare$
\end{lemma}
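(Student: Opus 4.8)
The plan is to show that, whatever the future measurements turn out to be, the polytope $\mathcal{F}_j(t+1)$ is contained in the intersection of the $n_um$ ``measurement slabs'' associated with the regressor vectors that form the columns of $\Phi(t+1|t)$, and then to bound the volume of that intersection directly by a change of variables. Monotonicity of Lebesgue measure under set inclusion then gives the claimed bound.

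First I would rewrite a single update \eqref{Eq:Normal_update} in slab form: intersecting $\mathcal{F}_j(\cdot)$ with the two half-spaces generated by a regressor $\varphi$ and the corresponding output $\tilde y_j$ is the same as imposing $|\varphi^T H_j - \tilde y_j|\le \epsilon_{d_j}+\epsilon_{v_j}$. Hence, after the $n_um$ consecutive updates driven by the regressors that form the columns of $\Phi(t+1|t)$ (the last $n_um-1$ slabs already present in $\mathcal{F}_j(t)$ together with the one generated by $\varphi(t+1|t)$), the resulting polytope satisfies
\[
\mathcal{F}_j(t+1)\ \subseteq\ \bigl\{\,H_j\in\bR^{n_um}:\ \|\Phi(t+1|t)^T H_j - c\|_\infty\le \epsilon_{d_j}+\epsilon_{v_j}\,\bigr\},
\]
where $c\in\bR^{n_um}$ collects the relevant output measurements; the inclusion is simply the observation that $\mathcal{F}_j(t+1)$ is obtained by intersecting these slabs, possibly together with further constraints inherited from earlier time steps, and that adding constraints cannot increase the volume.

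Next I would bound the volume of the right-hand set. If $\det\Phi(t+1|t)=0$ the stated bound is $+\infty$ and there is nothing to prove, so assume $\Phi(t+1|t)$ is nonsingular. Then $H_j\mapsto w=\Phi(t+1|t)^T H_j$ is a linear bijection with Jacobian determinant $\det\Phi(t+1|t)$, and it maps the set above onto the hypercube $\{w:\|w-c\|_\infty\le \epsilon_{d_j}+\epsilon_{v_j}\}$ of side length $2(\epsilon_{d_j}+\epsilon_{v_j})$, whose volume is $\bigl(2(\epsilon_{d_j}+\epsilon_{v_j})\bigr)^{n_um}$. By the change-of-variables formula the pre-image has volume $\bigl(2(\epsilon_{d_j}+\epsilon_{v_j})\bigr)^{n_um}/|\det\Phi(t+1|t)|$, and combining this with the inclusion above yields $\vol{\mathcal{F}_j(t+1)}\le \bigl(2(\epsilon_{d_j}+\epsilon_{v_j})\bigr)^{n_um}/|\det\Phi(t+1|t)|$.

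There is no deep obstacle here; the only points that need care are (i) identifying precisely which $n_um$ slabs are the relevant ones, i.e.\ matching them to the columns of $\Phi(t+1|t)$ (and keeping the time indices consistent), and (ii) isolating the degenerate case $\det\Phi(t+1|t)=0$ so that the change of variables is legitimate. I would also note that the bound does not depend on the center $c$, which is exactly why it can be asserted before $\tilde y_j(t+1)$ is measured; since the statement is quoted from \cite{id_aspect}, giving this short argument and referring there for the details would also be acceptable.
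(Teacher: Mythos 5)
Your proof is correct: the containment of $\mathcal{F}_j(t+1)$ in the intersection of the $n_um$ measurement slabs whose normals form $\Phi(t+1|t)$, followed by the change-of-variables computation of the parallelepiped's volume, is exactly the argument behind this result. The paper itself gives no proof and simply cites \cite{id_aspect}, where the same slab-intersection reasoning is used, so your approach matches the intended one; your remarks on the degenerate case $\det\Phi(t+1|t)=0$ and on the bound's independence of the (yet unmeasured) slab centers are the right points to flag.
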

The input to be applied to the plant is selected within a set of all the control inputs that satisfy input and output constraints and that keep all of the possible predicted output trajectories inside an interval obtained by scaling the values of $\overline{\epsilon}(k|t)$, centered at the trajectory $\hat{y}'(k|t),\, k\in[t+1,t+N]$.  In order to improve the knowledge on the system, we need a suitable cost function that penalizes the size of the model set. By considering the result of \cite{id_aspect}, we chose to consider $|\det(\Phi(t+1|t))|$ and compute an input that increases its value. Therefore, the optimization problem to be solved at the second stage of the control input calculation is given as:
\beq\label{Eq:FHOCP2_cost}
\begin{aligned}
&\max \limits_U \left|\det \Phi(t+1|t)   \right|\\
&\text {subject to}\\
&\eqref{Eq:input_rate}, \, \eqref{Eq:end_constraint},\\
&\left.\ba{lll}
H\varphi(k|t)&\!\leq\!& \hat{y}'(k|t)\!+\!r\overline{\epsilon}(k|t)\\
H\varphi(k|t)&\!\geq\! &\hat{y}'(k|t)\!-\!r\overline{\epsilon}(k|t)\\
EH\varphi(k|t)\!+\!\overline{d}&\! \leq\!&\! p
\ea \right\}\ba{ll}\forall H\in \mathcal{F}(t)\\ \forall k\in[t\!+\!1, t\!+\!N] \ea
\end{aligned}
\eeq
where $r\in \bR$, $r\geq 1$ is a design parameter, selected by the control designer, that determines the allowed augmentation of the intervals containing the predicted outputs between the first and second stages. This parameter indicates by how much the bounds \eqref{Eq:bounds1} are allowed to be inflated. Problem \eqref{Eq:FHOCP2_cost} is a non convex, infinite dimensional program that is in general very difficult to solve. However, in this specific case the infinite dimensional constraints can be reformulated into a set of linear constraints. In fact, if we define the matrix $E''$ and the vectors $\overline{d''}$ and $p''(k|t),\, k\in [t+1,t+N]$ as:
\beq\label{Eq:redefine_matrices}
E''=\left[ \ba{c} I\\ -I\\ E \ea \right],\, \overline{d''}=\left[\ba{c} \bold{0}\\ \bold{0}\\ \overline{d} \ea   \right],\, p''(k|t)=\left[ \ba{c} \hat{y}'(k|t)\!+\!r\overline{\epsilon}(k|t)\\-\hat{y}'(k|t)\!+\!r\overline{\epsilon}(k|t)\\p\ea \right],
\eeq
where $I$ and $\bold{0}$ respectively denote the identity and zero matrices of apropriate size, then the infinite dimensional constraints in \eqref{Eq:FHOCP2_cost} can be rewritten as
\beq
E''H\varphi(k|t)\!+\!\overline{d''}\leq p''(k|t),\, \forall H\in \mathcal{F}(t),\, k\in[t\!+\!1,t\!+\!N],
\eeq
and these constraints can be written in the form of linear constraints by using the Lemma \ref{L:constraint_equivalence_1}. In addition, we note that from the definition of the matrix $\Phi(t+1|t)$ and the regressor vector $\varphi(t+1|t)$, it follows that the determinant of the matrix $\Phi(t+1|t)$ can be expressed as a linear function of the first future control input as: 
\beq\label{Eq:linear_m}
\det\left(\Phi(t+1|t)\right)=k^T u(t|t)+n,
\eeq
where the derivation of $k \in \bR^{n_u}$ and $n \in \bR$ is given in the Appendix \ref{S:alpha_beta}. Therefore, the optimization problem \eqref{Eq:FHOCP2_cost} can be solved by first solving the following two LPs:
\bea\label{Eq:cost1}
J_1=\max \limits_{u(k|t)} k^T u(t|t)\\\label{Eq:cost2}
J_2=\min \limits_{u(k|t)} k^T u(t|t) \label{Eq:constraints}
\eea
\beq
\ba{l}
\text{subject to}\\
\eqref{Eq:input_rate}, \, \eqref{Eq:end_constraint},\\
\left.\ba{l}\!\!
A(t)^T\lambda_l(k|t)=\left[\ba{c}
e_{l1}''\varphi(k|t)\\
\vdots\\
e_{ln_y}''\varphi(k|t)
\ea\right]\\
b(t)^T\lambda_l(k|t)\leq p_l''(k|t)\!-\!\sum\limits_{j=1}^{n_y}e_{lj}''\epsilon_{d_j}\\
\lambda_l(k|t)\geq \bold 0
\ea\!\!\! \right\}\ba{cc}\forall l=1,\!\hdots,\!n_o\!+\!2n_y\\
\forall k\!\in\![t\!+\!1,t\!+\!N]\ea
\ea
\eeq
where $e_{lj}''$ stand for the element of the $l^\text{th}$ row and $j^\text{th}$ column of the matrix $E''$ and $p_l''(k|t)$ is the $l^\text{th}$ element of the vector $p''(k|t)$. The matrix $A(t)$ and the vector $b(t)$ are defined as in \eqref{Eq:explanation}. The solution to \eqref{Eq:FHOCP2_cost} can then be obtained by comparing the cost function values of the two LPs: if $|J_1+n|\leq |J_2+n|$, then the solution of the LP with the cost function \eqref{Eq:cost1} is the solution of the original problem and vice-versa. Initially, at time step $t=0$, the regressor vectors that form the matrix $\Phi(1|0)$ can be constructed from the constraints that form the initial model set $\mathcal{F}(0)$.\\
%\begin{remark}
%If different numbers of parameters are used to model different input-output transfer functions, the derivations remain similar, but the number of LPs of the form %\eqref{Eq:cost1}--\eqref{Eq:constraints} that have to be solved equals $2^{n_y}$.
%\end{remark}
The proposed extension is computationally tractable as it requires the additional solution of two LPs. Moreover, the optimization problem that is solved in the second stage of the control input calculation is guaranteed to be recursively feasible, since the solution obtained in the first stage is always a feasible solution for the second stage. Due to this, the proposed extension allows us to retain the guarantee of constraint satisfaction for all time.  
\begin{remark}\label{R:note_on_complexity}
The cost function \eqref{Eq:FHOCP2_cost} only approximately minimizes the volume of the polytopes that will form the model set in the next time step. In order to have exact minimization, all the collected control inputs and the effects of the limited polytopic update algorithm would need to be taken into account. However, the resulting optimization problem would then be computationally intractable. In addition, the cost function minimizes only the volume of the polytopes in the next time step and not over the whole prediction horizon, as this would also lead to a computationally intractable optimization problem. Despite these approximations, the proposed modification gives good results in practice, as shown by the numerical example.
\end{remark}
\section*{Appendix}
\subsection{Definition of the matrices in \eqref{Eq:recursive_regressor}} \label{S:Matrices_reg}        % Sections and subsections are supported
For the case when impulse response parametrization is used ($\mathcal{L}_k(a,q)$ is given by \eqref{Eq:impulse_response}), we define the matrix $w\in \bR^{n_u \times n_u}$ with the following structure:
\begin{equation*}
w=\left[ \ba{ccccc} 0 & 0 & \hdots & 0 & 0\\
                                a & 0 & \hdots & 0 & 0\\
                                0 & a & \hdots & 0 & 0\\
                                \vdots & \vdots & \ddots & \vdots & \vdots\\
                               0 & 0 & \hdots & a & 0
\ea \right] \in \bR^{m\times m}.
\end{equation*}
Based on this, the matrix $W$ is given by:
\begin{equation}\label{Eq:W}
W=\left[\ba{cccc} w &\bold{0} & \hdots & \bold{0}\\
              \bold{0} & w & \hdots & \bold{0}\\
              \vdots & \vdots & \ddots & \vdots\\
               \bold{0} & \bold{0} & \hdots & w
\ea\right] \in \bR^{n_um \times n_um},
\end{equation}
where $\bold{0}$ denotes the matrix of all zeros with appropriate dimension.
Similarly, in order to define the matrix $Z\in \bR^{n_um\times n_u}$, we first define the vector $z=[a,0,\hdots,0]^T\in \bR^m$. Then we can write the matrix $Z$ as:
\begin{equation}\label{Eq:Z}
Z=\left[\ba{cccc} z & \bold{0} & \hdots & \bold{0}\\
                             \bold{0} & z & \hdots & \bold {0}\\
                              \vdots & \vdots & \ddots & \vdots \\
                              \bold{0} & \bold{0} & \hdots & z \ea   \right]\in \bR^{n_um\times n_u},
\end{equation}
where $\bold{0}$ denotes the vector of all zeros of dimension $m$.\\
For the case when Laguerre basis functions are used for parametrization ($\mathcal{L}_k(a,q)$ is given by \eqref{Eq:Laguere}), we define the matrix $w \in \bR^{m\times m}$ as:
\begin{equation*}
w=\left[\ba{ccccc} a & 0 & 0 & \hdots & 0\\
                                                    1-a^2 & a & 0 &\hdots & 0\\
                                                     (-a)(1-a^2) & 1-a^2 & a & \hdots & 0\\
                                                     \vdots & \vdots & \vdots & \ddots & \vdots\\
                                                     (-a)^{m-2}(1-a^2) & (-a)^{m-3}(1-a^2)& \hdots &\hdots& a \ea   \right].
\end{equation*}
The matrix $W$ is then defined as in \eqref{Eq:W}. The vector $z\in \bR^m$ is defined as:
\begin{equation*}
z=\sqrt{1-a^2}\left[\ba{c} 1\\-a\\(-a)^2\\ \hdots \\ (-a)^{m-1} \ea  \right]\in \bR^m,
\end{equation*}
and $Z$ is given as in \eqref{Eq:Z}.\\
For the Kautz basis functions, the matrices can be derived in a similar way, but have a more complex structure. In the case when the combination of impulse response and basis function parametrization is used, the matrices can easily be derived along the lines presented here.
\subsection{Definition of the matrices in \eqref{Eq:linear_m}}\label{S:alpha_beta}
In order to define $k \in \bR^{n_u}$ and $n \in \bR$, we first shortly recall the definition of the matrix determinant. For a square matrix $A\in \bR^{N\times N}$, the determinant is defined as:
\beq\label{Eq:det_def}
\det (A)=\sum \limits_{\sigma \in S_N}\sgn(\sigma) \prod \limits_{i=1}^N A_{i,\sigma_i}, 
\eeq 
where $S_N$ denotes the set of all the permutations of the integers $\{1,\hdots,N\}$, $\sgn (\sigma)$ returns $1$ if the permutation $\sigma$ is even and $-1$ if the permutation is odd and $A_{i,\sigma_i}$ denotes the element of the matrix $A$ in the $i^{\text{th}}$ row and the column defined by the $i^{\text{th}}$ element in the permutation $\sigma$.\\
We rewrite the matrix $\Phi(t+1|t)$ as $\Phi(t+1|t)=\left[\Phi',\, W\varphi(t)+Zu(t|t)\right]$, where
\begin{equation*}
\Phi'=\left[ \ba{ccc}  \Phi_{1,1} &\hdots & \Phi_{1,n_um-1}\\ \vdots & \ddots & \vdots\\ \Phi_{n_um,1}& \hdots & \Phi_{n_um,n_um-1} \ea \right],
\end{equation*}
with the elements $\Phi_{p,q}\in \bR,\,\, p=1,\hdots,n_um,\, q=1,\hdots,n_um-1$, denoting the corresponding elements of the applied regerssor vectors $\varphi(k),\, k=t-n_um+1,\hdots, t-1$. In addition, we denote by $S_{n_um}^i,\, i=1,\hdots, n_um$ the set of all the permutations of the integers $\{1,\hdots,n_um\}$ that have $n_um$ at the $i^{\text{th}}$ position. Based on this we define the vector $V \in \bR^{n_um}$ as:
\beq\label{Eq:medjurezultat2}
V=\left[ \ba{c}  \sum \limits_{\sigma \in S_{n_um}^1} sgn(\sigma) \prod\limits_{i=2}^{n_um}\Phi_{i,\sigma_i}\\ \sum \limits_{\sigma \in S_{n_um}^2} sgn(\sigma) \prod\limits_{i=1,\,i\neq 2}^{n_um}\Phi_{i,\sigma_i}\\ \vdots \\\sum \limits_{\sigma \in S_{n_um}^{n_um}} sgn(\sigma) \prod\limits_{ i=1,\, i\neq n_um}^{n_um}\Phi_{i,\sigma_i}\ea \right].
\eeq
Then we can express $k$ and $n$ by using $V$ as:
\beq\label{Eq:alpha_beta_def}
\begin{aligned}
k^T&=V^TZ\\
n&=V^TW\varphi(t).
\end{aligned}
\eeq

\end{document}